\let\newpf\proof \let\proof\relax
\def\bm{\begin{matrix}}
\def\em{\end{matrix}}
\newcommand{\bt}{\begin{thm}}
\newcommand{\et}{\end{thm}}
\newcommand{\bl}{\begin{lemma}}
\newcommand{\el}{\end{lemma}}
\newcommand{\beq}{\begin{eqnarray}}
\newcommand{\eeq}{\end{eqnarray}}
\def\be{\begin{equation}}
\def\ee{\end{equation}}
\def\ba{{\begin{align}}}
\def\ea{{\end{align}}}
\def\0{{\mathbf 0}}
\newtheorem{thm}{Theorem}[section]
\newtheorem{lemma}[thm]{Lemma}
\newtheorem{prop}[thm]{Proposition}
\theoremstyle{remark}
\newtheorem{rem}{Remark}[section]
\numberwithin{equation}{section}
\def \bn {\hfill \\ \smallskip\noindent}
\theoremstyle{definition}
\newtheorem{definition}{Definition}[section]
\def\proof{\bn {\bf Proof.} }
\def\note#1
\newcommand{\diam}{\operatorname{diam}}
\newcommand{\dist}{\operatorname{dist}}
\renewcommand{\mod}{\operatorname{mod}}
\newcommand{\N}{{\mathbb N}}
\newcommand{\Q}{{\mathbb Q}}
\newcommand{\R}{{\mathbb R}}
\newcommand{\T}{{\mathbb T}}
\newcommand{\Z}{{\mathbb Z}}
\def\B0{{\bold{0}}}
\def\Empty{}
\newcommand\oplabel[1]{
  \def\OpArg{#1} \ifx \OpArg\Empty {} \else
  	\label{#1}
  \fi}
\newcommand{\comm}[1]{}
\newcommand{\comment}[1]{}
\begin{document}

\title{Generic continuous spectrum for multi-dimensional quasiperiodic Schr\"odinger operators with rough potentials}
\title[]{Generic continuous spectrum for multi-dimensional quasiperiodic Schr\"odinger operators with rough potentials}

\author{Rui Han}

\author{Fan Yang}
\thanks{}

\begin{abstract}
We study the multi-dimensional operator $(H_x u)_n=\sum_{|m-n|=1}u_{m}+f(T^n(x))u_n$, where $T$ is the shift of the torus $\T^d$. 
When $d=2$, we show the spectrum of $H_x$ is almost surely purely continuous for a.e. $\alpha$ and generic continuous potentials.
When $d\geq 3$, the same result holds for frequencies under an explicit arithmetic criterion.
We also show that general multi-dimensional operators with measurable potentials do not have eigenvalue for generic $\alpha$.
\end{abstract}

\maketitle

\section{Introduction}
In this note, we are interested in quasiperiodic Schr\"odinger operators acting on $l^2(\Z^d)$ as follows
\begin{align}\label{defH}
(H_x u)_n=\sum_{|m-n|=1}u_{m}+f(T^nx)u_n,
\end{align}
where $f:\T^d\rightarrow\R$, $T^nx=(x_1+n_1\alpha_1,...,x_d+n_d\alpha_d)$, $|m|=\sum_{j=1}^d|m_j|$. We refer to $f$ as the potential, $\alpha\in (\T\backslash \Q)^d$ as the frequency and $x\in \T^d$ as the phase.
It is known that the spectral types of $H_x$ are almost surely independent of $x$. 
By the well known RAGE theorem, different spectral types lead to different long-time behaviour of the solutions to time-dependent Schr\"odinger equations.
Thus one is naturally interested in finding the way to identify the spectral types of a given operator. 
In this note we will show that for multi-dimensional operators, purely continuous spectrum is a generic\footnote{In this note, generic means dense $G_{\delta}$.} phenomenon.

In the one-dimensional case, $d=1$, a very useful criterion to exclude point spectrum, due to Gordon \cite{G0}, states that if the potential can be approximated in a reasonably fast sense by periodic ones, then the operator does not have point spectrum. 
Following this idea, Boshernitzan and Damanik showed that given any $\alpha$, for a generic continuous potential, $H_x$ has empty point spectrum for a.e. $x$ \cite{BD}. 
In the first part of this note, we generalize this result to multi-dimensional case. 
When $d=2$, we show this result holds for an explicit full measure set of $\alpha$, when $d\geq 3$, we show this result is true under an explicit arithmetic criterion for $\alpha$.
It is interesting whether this result holds for a.e. $\alpha$ in the $d \geq 3$ case.

In the second part of this note, we explore the generic phenomenon in the frequency space. 
According to Simon's Wonderland Theorem \cite{Wonder}, dense continuous spectrum for a large class of metric spaces of operators implies generic continuous spectrum.
Fixing a continuous potential, since convergence in frequency implies operator convergence in the strong resolvent sense.
Thus, by Wonderland Theorem, generic continuous spectrum follows from continuous spectrum for rational frequencies.
However discontinuous potentials do not fall into the criterion of Wonderland Theorem.
It was recently observed by Gordon \cite{G1} that when $d=1$\footnote{The author actually dealt with one-dimensional operator but with multi-dimensional frequency.}, one can prove continuous spectrum for generic frequency even for measurable potentials.
In this note we generalize this result to multi-dimensional operators.

A key ingredient that enables us to deal with multi-dimensional operators is a criterion recently discovered by Gordon and Nemirovski \cite{G2}.

Our results for generic continuous potentials are as follows
\begin{thm}\label{2d}
When $d=2$, if $(\alpha_1, \alpha_2)$ are not both of bounded type, then for generic continuous potentials $f$, $H_x$ has no point spectrum for a.e. $x\in \T^2$.
\end{thm}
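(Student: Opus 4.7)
The plan is to mimic the Boshernitzan--Damanik strategy from the one-dimensional setting, with the multi-dimensional Gordon criterion of Gordon and Nemirovski~\cite{G2} playing the role of the classical 1D Gordon lemma. The hypothesis that $(\alpha_1,\alpha_2)$ are not both of bounded type means, after renaming coordinates, that $\alpha_1$ has unbounded continued fraction partial quotients; equivalently, $\liminf_{q\to\infty}q\,\|q\alpha_1\|_{\T}=0$, where $\|\cdot\|_\T$ denotes distance to the nearest integer. The goal is to exhibit a dense $G_\delta$ subset $\mathcal{G}\subset C(\T^2,\R)$ of potentials such that, for every $f\in\mathcal{G}$, $H_x$ has no point spectrum for a.e. $x\in\T^2$.

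The first step is to specialize the criterion of \cite{G2} to shifts in the $e_1$-direction. I expect it to yield a sufficient condition of the form: if there exists $q_k\to\infty$ with
\[
\|f\circ T^{(q_k,0)}-f\|_{C^0(\T^2)}\cdot\psi(q_k)\to 0
\]
for a suitable weight $\psi$ of polynomial growth, then $H_x$ has no point spectrum for a.e.\ $x$. Here $T^{(q,0)}$ acts on $\T^2$ by $(x_1,x_2)\mapsto(x_1+q\alpha_1,x_2)$, which is close to the identity precisely when $\|q\alpha_1\|_{\T}$ is small. The delicate point---and where the real content of \cite{G2} enters---is confirming that a polynomial weight, say $\psi(q)=q$, is enough in the multi-dimensional lattice setting. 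I will take this polynomial rate as the target of what follows.

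Given the criterion, the $G_\delta$ structure is essentially formal. Set
\[
\mathcal{G}=\bigcap_{N\ge 1}\;\bigcup_{q>N}\;\Bigl\{f\in C(\T^2,\R)\;:\;\|f\circ T^{(q,0)}-f\|_{C^0}\cdot q<1/N\Bigr\}.
\]
Each inner set is an open condition in $f$ (a strict sup-norm inequality), each union over $q>N$ is open, and $\mathcal{G}$ is a countable intersection of opens, hence $G_\delta$; any $f\in\mathcal{G}$ supplies, by construction, a subsequence $q_k\to\infty$ along which the Gordon--Nemirovski hypothesis holds, so $H_x$ has no point spectrum for a.e.\ $x$.

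Density is where the arithmetic hypothesis is used. Given $f\in C(\T^2,\R)$ and $\varepsilon>0$, approximate $f$ uniformly within $\varepsilon$ by a Lipschitz function $\tilde f$ of Lipschitz constant $L$ (standard mollification). For any integer $q$,
\[
\|\tilde f\circ T^{(q,0)}-\tilde f\|_{C^0}\le L\,\|q\alpha_1\|_{\T},
\]
so $\|\tilde f\circ T^{(q,0)}-\tilde f\|_{C^0}\cdot q\le L\,q\,\|q\alpha_1\|_{\T}$. The hypothesis that $\alpha_1$ is not of bounded type forces the right-hand side to zero along the subsequence of denominators $q_k$ of the convergents of $\alpha_1$ (where $q_k\|q_k\alpha_1\|_{\T}\sim 1/a_{k+1}$ and $a_{k+1}$ is unbounded), so $\tilde f\in\mathcal{G}$. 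The main obstacle is really the first step: extracting from \cite{G2} a multi-dimensional Gordon criterion whose rate is weak enough to be satisfied by all Lipschitz potentials under the stated arithmetic hypothesis on $\alpha_1$; once that is in hand, the remainder reduces to the $G_\delta$ bookkeeping and density of Lipschitz functions in $C(\T^2,\R)$.
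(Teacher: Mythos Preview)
Your proposal has a genuine gap at exactly the point you flag as ``delicate'': the multi-dimensional Gordon--Nemirovski criterion from \cite{G2} does \emph{not} admit a polynomial weight. As recorded in the paper (Theorem~\ref{gordon}), the comparison potential $V_\tau$ must satisfy
\[
\rho_\tau \;<\; (2d-1+M_\tau+\lambda_0)^{-(2d+\gamma)\tau_1\cdots\tau_d},
\]
i.e.\ the approximation rate is \emph{exponential} in the product of the periods. Taking $\psi(q)=q$ is therefore far too weak, and your density argument collapses: for a Lipschitz $\tilde f$ one only gets $\|\tilde f\circ T^{(q,0)}-\tilde f\|_{C^0}\le L\|q\alpha_1\|_{\T}$, and making this smaller than $C^{-c\,q}$ would require $\alpha_1$ to be Liouville, not merely of unbounded type. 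So the generic set $\mathcal G$ you define would indeed be $G_\delta$, but it would not contain all Lipschitz functions, and you have no argument that it is dense.

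There is a second, related issue. The criterion from \cite{G2} requires a $(\tau_1,\dots,\tau_d)$-periodic comparison with \emph{all} $\tau_i\to\infty$; a one-directional recurrence $T^{(q,0)}$ does not produce this structure. The paper handles both difficulties at once: it first identifies (Lemma~\ref{iff}) the two-dimensional arithmetic condition as the existence of pairs $(\tau_1^{(n)},\tau_2^{(n)})$ with $\tau_2^{(n)}\|\tau_1^{(n)}\alpha_1\|_{\T}\to 0$ and $\tau_1^{(n)}\|\tau_2^{(n)}\alpha_2\|_{\T}\to 0$ simultaneously---this uses the continued fractions of \emph{both} frequencies, even when only one is of unbounded type---and then, to beat the exponential rate, builds the dense class not by Lipschitz approximation but via Rokhlin--Halmos towers: one takes functions that are \emph{exactly} constant on long orbit segments (so $\rho_\tau=0$ there) and passes to a super-exponentially thin $C^0$-neighborhood of these. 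Density then comes from the small diameter of the orbit segments, not from any modulus of continuity of $f$.
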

The proof of Theorem \ref{2d} relies on the following result about general multi-dimensional operators. Let $\|x\|_{\T}=\dist{(x, \Z)}$.
\begin{thm}\label{genericae}
Suppose there exists an infinite sequence $\mathcal{Q} =\{ \tau^{(n)}=(\tau_1^{(n)}, \cdots, \tau_d^{(n)}) \}$ such that
\begin{align}\label{GenericA}
\lim_{n\rightarrow\infty}\frac{\tau_1^{(n)}\cdots \tau_d^{(n)}}{\tau_i^{(n)}}\|\tau_i^{(n)}\alpha_i\|_{\T}=0\ \ \mathrm{for}\ \ \mathrm{any}\ \ i=1,...,d.
\end{align}
Then for generic continuous potentials $f$, $H_x$ has no point spectrum for a.e. $x\in \T^d$.
\end{thm}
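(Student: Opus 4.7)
The plan is to invoke the multi-dimensional Gordon-type criterion of Gordon and Nemirovski \cite{G2}, advertised in the introduction as the key ingredient, and combine it with a Baire category argument generalizing the one-dimensional approach of Boshernitzan--Damanik \cite{BD}.

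The first step is to translate the hypothesis (\ref{GenericA}) into a quantitative periodicity statement about the sampled potential. If $f\in C(\T^d)$ has modulus of continuity $\omega_f$, then for every $x\in\T^d$, every $n\in\Z^d$, and each $i$,
$$\bigl|f(T^{n+\tau_i^{(n)}e_i}x)-f(T^nx)\bigr|\le \omega_f\bigl(\|\tau_i^{(n)}\alpha_i\|_\T\bigr),$$
uniformly in $x$ and $n$. The Gordon--Nemirovski criterion rules out $\ell^2$ eigenfunctions of $H_x$ as soon as one finds periods $\tau^{(n)}$ whose cross-sectionally weighted periodicity defects
$$\frac{\tau_1^{(n)}\cdots\tau_d^{(n)}}{\tau_i^{(n)}}\,\omega_f\bigl(\|\tau_i^{(n)}\alpha_i\|_\T\bigr)$$
tend to $0$ for every $i$; the weight $\prod_j\tau_j^{(n)}/\tau_i^{(n)}$ is exactly the number of $e_i$-fibers inside a box of shape $\tau^{(n)}$, which is why it matches the form of the condition in (\ref{GenericA}).

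Next comes the Baire category argument. For each integer $k\ge 1$ set
$$\mathcal{G}_k=\Bigl\{f\in C(\T^d)\,:\,\exists\,n,\ \forall\,i,\ \tfrac{\tau_1^{(n)}\cdots\tau_d^{(n)}}{\tau_i^{(n)}}\,\omega_f\bigl(\|\tau_i^{(n)}\alpha_i\|_\T\bigr)<1/k\Bigr\}.$$
Openness of $\mathcal{G}_k$ in the sup norm follows because $f\mapsto\omega_f(\delta)$ is upper semi-continuous at fixed $\delta$, so the strict inequality defining $\mathcal{G}_k$ is stable under small perturbations. Density is proved by approximation by trigonometric polynomials: any such $P$ is Lipschitz with some constant $L_P$, so $\omega_P(\delta)\le L_P\delta$, and then (\ref{GenericA}) forces $P\in\mathcal{G}_k$ for all sufficiently large $n\in\mathcal{Q}$. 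Hence $\mathcal{G}=\bigcap_k\mathcal{G}_k$ is a dense $G_\delta$ in $C(\T^d)$, and by construction every $f\in\mathcal{G}$ satisfies the Gordon--Nemirovski hypothesis uniformly in $x$, so $H_x$ has no point spectrum for every $x$, and in particular for almost every $x$.

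The main technical obstacle is to match the precise form of the Gordon--Nemirovski criterion in \cite{G2} to the weighted condition (\ref{GenericA}); once this matching is established (the cross-sectional weight $\prod_j\tau_j^{(n)}/\tau_i^{(n)}$ being the natural one, as explained above), the remaining Baire category argument is a straightforward adaptation of the one-dimensional argument in \cite{BD}. A minor additional point is to ensure that the ``uniformly in $x$'' passage in the criterion is legitimate, which is immediate from the $x$-independent bound on $\omega_f(\|\tau_i^{(n)}\alpha_i\|_\T)$.
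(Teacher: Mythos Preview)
Your proposal has a genuine gap: the Gordon--Nemirovski criterion (stated in the paper as Theorem~\ref{gordon}, combining Theorems~3.1 and~5.1 of \cite{G2}) does \emph{not} say what you claim. It requires the periodicity defect $\rho_\tau=\max_{\|n\|_\infty\le(2d+\delta)\tau_1\cdots\tau_d}|V_\tau(n)-V(n)|$ to be \emph{exponentially small} in the volume $\tau_1\cdots\tau_d$, namely $\rho_\tau<(2d-1+M_\tau+\lambda_0)^{-(2d+\gamma)\tau_1\cdots\tau_d}$. Your ``weighted defect'' condition $\frac{\tau_1\cdots\tau_d}{\tau_i}\omega_f(\|\tau_i\alpha_i\|_\T)\to 0$ is vastly weaker and does not follow from any known multi-dimensional Gordon-type lemma; the one-dimensional polynomial-decay versions rely on transfer-matrix arguments that have no analogue here. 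Concretely, for a Lipschitz $P$ the hypothesis~(\ref{GenericA}) only gives $\omega_P(\|\tau_i\alpha_i\|_\T)\lesssim\|\tau_i\alpha_i\|_\T=o(\tau_i/(\tau_1\cdots\tau_d))$, which is nowhere near $C^{-\tau_1\cdots\tau_d}$.

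This is exactly why the paper's proof does not work via the modulus of continuity. Instead, it uses the Rokhlin--Halmos lemma to build, for each $k$, a tower of disjoint translates $\{O_{n_k}+j\star\alpha\}_{\|j\|_\infty\le k\Gamma_{n_k}}$; condition~(\ref{GenericA}) is used only to ensure that the sets $U_{n_k,l,m}$ obtained by grouping $\tau^{(n_k)}$-translates within the tower have small diameter (see (\ref{diamU})). One then defines $F_{n_k}$ to be the functions that are \emph{exactly constant} on each $U_{n_k,l,m}$, and takes $\mathcal{F}_{n_k}$ to be its $k^{-c\Gamma_{n_k}}$-neighborhood --- an exponentially thin tube, but dense because uniform continuity lets any $f\in C(\T^d)$ be approximated by such piecewise-constant functions. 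For $f$ in the resulting $G_\delta$ and for $x$ in the large-measure set $T_{\tilde n_k}$ covered by the tower, the periodicity defect is then genuinely $<k^{-c\Gamma_{\tilde n_k}}$, which matches Theorem~\ref{gordon}. Borel--Cantelli then gives the conclusion for a.e.\ $x$ (not every $x$, contrary to your final claim). In short, the Rokhlin tower construction is not an optional refinement but the mechanism that bridges the polynomial hypothesis~(\ref{GenericA}) and the exponential requirement of the criterion; your route cannot cross that gap.
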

\begin{rem}\label{remlabel}
For $d=2$, (\ref{GenericA}) holds if and only if $(\alpha_1, \alpha_2)$ are not both of bounded type (see definition \ref{defbdd}), see Lemma \ref{iff} in section \ref{2dproof}. However for $d\geq 3$, (\ref{GenericA}) only holds for Lebesgue measure zero set of frequencies due to a simple argument by Borel-Cantelli lemma, see section \ref{rem}. 
\end{rem}

Our result for measurable potentials is as follows.
\begin{thm}\label{measuregeneric}
Let the potential $f$ be a measurable function. For generic $\alpha\in\T^d$, $H_x$ has empty point spectrum for a.e. $x$.
\end{thm}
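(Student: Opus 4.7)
My plan is to identify, for the fixed measurable $f$, a dense $G_\delta$ subset of $\T^d$ of frequencies $\alpha$ for which the Gordon--Nemirovski criterion of \cite{G2} can be verified in a form suited to measurable (rather than continuous) potentials. For each $N\in\N$, set
\[
U_N := \Bigl\{\alpha \in \T^d : \exists\, \tau \in \Z^d \text{ with } \min_i |\tau_i| > N \text{ and } \max_i \frac{\tau_1\cdots\tau_d}{\tau_i}\|\tau_i\alpha_i\|_\T < 1/N \Bigr\}.
\]
This is a union (over countably many $\tau$) of open conditions in $\alpha$, hence open. It is also dense: given $\alpha_0 \in \T^d$ and $\epsilon > 0$, approximate each $\alpha_{0,i}$ by a rational $p_i/q_i$ with $q_i > N$ and $|\alpha_{0,i}-p_i/q_i|<\epsilon/2$, then pick $\alpha_i$ within $1/(Nq_1\cdots q_d)$ of $p_i/q_i$; the resulting $\alpha$ lies in $U_N$ with $\tau = (q_1,\ldots,q_d)$. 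Thus $\mathcal{G} := \bigcap_N U_N$ is a dense $G_\delta$, which I would further intersect with the (still dense $G_\delta$) set of $\alpha$ for which the shift $T$ is ergodic.

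For $\alpha \in \mathcal{G}$ extract a sequence $\tau^{(n)} \in \Z^d$ with $|\tau^{(n)}|\to\infty$ realizing the condition defining $U_n$. Because $\tau^{(n)}\alpha \to 0$ in $\T^d$, continuity of translation in $L^2(\T^d)$ gives $\|f\circ T^{\tau^{(n)}} - f\|_{L^2} \to 0$. A multidimensional Birkhoff ergodic theorem then yields, for a.e.\ $x$ and each fixed $n$,
\[
\frac{1}{L^d}\sum_{\|m\|_\infty \le L}\bigl|f(T^{m+\tau^{(n)}}x)-f(T^m x)\bigr|^2 \xrightarrow{L\to\infty} \|f\circ T^{\tau^{(n)}} - f\|_{L^2}^2,
\]
so choosing $L_n$ growing suitably with $n$ makes the sampled potential $V(m) = f(T^m x)$ approximately $\tau^{(n)}$-periodic on the box $\{\|m\|_\infty \le L_n\}$ in an averaged sense. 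Feeding this sequence $\tau^{(n)}$ into the Gordon--Nemirovski criterion rules out $\ell^2$ eigenfunctions of $H_x$ for a.e.\ $x$.

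The main obstacle is that the Gordon--Nemirovski criterion is naturally formulated with \emph{pointwise} smallness of $V(m+\tau^{(n)})-V(m)$ on the box, while measurable $f$ only gives $L^2$ (or in-measure) smallness. To bridge the gap, I would apply Lusin's theorem: for each $\eta > 0$, pick a compact $K_\eta \subset \T^d$ with $|\T^d\setminus K_\eta|<\eta$ on which $f$ is uniformly continuous with modulus $\omega_\eta$. By Birkhoff applied to $\mathbf{1}_{K_\eta}$, for a.e.\ $x$ the fraction of $m \in \{\|m\|_\infty \le L_n\}$ with $T^m x \in K_\eta$ exceeds $1-2\eta$ once $L_n$ is large. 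For those $m$ (and for $n$ large enough that $T^{\tau^{(n)}}$ is nearly the identity on $\T^d$, after a slight thickening of $K_\eta$), uniform continuity forces $|V(m+\tau^{(n)})-V(m)| \le \omega_\eta(\|\tau^{(n)}\alpha\|_\infty) \to 0$. The contribution from the remaining, "exceptional" $m$ is controlled by boundedness of a truncation of $f$ together with $\eta$; balancing $\eta$, $L_n$, and the rate of the rational approximations defining $\mathcal{G}$ is the delicate point of the argument.
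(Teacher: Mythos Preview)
Your argument has a genuine gap that the Lusin/Birkhoff patch cannot close. The Gordon--Nemirovski criterion (Theorem~\ref{gordon}) demands, for each $\tau$ in the sequence, that the \emph{maximum} of $|V(n)-V_\tau(n)|$ over the whole box $\{\|n\|_\infty\le(2d+\delta)\tau_1\cdots\tau_d\}$ be bounded by $(2d-1+M_\tau+\lambda_0)^{-(2d+\gamma)\tau_1\cdots\tau_d}$. Your Birkhoff step only guarantees that a proportion $1-2\eta$ of the lattice points $m$ in the box have $T^mx\in K_\eta$; for the remaining $\sim\eta\,(\tau_1\cdots\tau_d)^d$ points you have no control at all, and a single uncontrolled point already destroys the $\max$-bound. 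Even on the good points, turning $\omega_\eta(\|\tau\alpha\|)$ into the required \emph{exponentially} small quantity forces $\|\tau\star\alpha\|$ to be small at a rate dictated by $\omega_\eta$, i.e.\ by $f$---something your set $\mathcal{G}$ does not encode.

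This is the deeper problem: your dense $G_\delta$ set $\mathcal{G}$ is defined independently of $f$, and the condition $\max_i\frac{\tau_1\cdots\tau_d}{\tau_i}\|\tau_i\alpha_i\|_\T<1/N$ is precisely~(\ref{GenericA}), which in the paper suffices only for Theorem~\ref{genericae}, where the potential (not the frequency) is the generic object. For a \emph{fixed} measurable $f$ the paper instead lets the generic frequency set depend on $f$ through the function $\kappa(\epsilon,\eta)$ of~(\ref{defkappa}) and the truncation level $M_\tau$: the condition is $\|\tau\star\alpha\|_{\T^d}<\kappa\bigl(M_\tau^{-(2d+\gamma)\tau_1\cdots\tau_d},(\tau_1\cdots\tau_d)^{-2}\bigr)$ (Lemma~\ref{aephase}). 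This forces the bad sets $X_j^\tau=F(j\star\tau\star\alpha,\cdot)$ and $Y_j^\tau=E(M_\tau)-j\star\tau\star\alpha$ to have total measure $|Z^\tau|\to 0$; Borel--Cantelli then yields, for a.e.~$x$, infinitely many $\tau$ with $x\notin Z^\tau$, and for those $\tau$ the pointwise estimate holds at \emph{every} lattice point of the box simultaneously. That simultaneous pointwise control---obtained by measuring bad sets rather than by ergodic averaging---is exactly what your scheme is missing.
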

\begin{rem}
This theorem could be easily generalized to the case of arbitrary (not necessarily equal to $d$) number of frequencies.
\end{rem}

We organize this note as follows: section 2 serves as a preparation for our proofs of Theorems \ref{genericae}, \ref{measuregeneric} in sections 3 and 5. The proof of Theorem \ref{2d} will be discussed in section 4. In section 6 we show the simple argument by Borel-Cantelli lemma that we mentioned in Remark \ref{remlabel}.

\section{Preliminaries}
For a Borel set $U\subset \R$, we let $|U|$ be its Lebesgue measure. For $x\in \R^d$, let $\|x\|_{\T^d}=\dist{(x, \Z^d)}$. For a measurable function $f$, let $\|f\|_{\infty}=\{\inf M\geq 0: |f(x)|\leq M\ \ \mathrm{for} \ a.e.x\}$ be the $L_{\infty}$ norm.
\subsection{Some facts from measure theory}\
Let $f$ be a measurable function on $\T^d$. It is known that $f(\cdot +y)$ converges to $f(\cdot)$ in measure as $\T^d \ni y \rightarrow 0$. 
We set
 \begin{equation}\label{defsetF}
F(y,\epsilon)=\{x\in \T^d: |f(x+y)-f(x)|\geq \epsilon\}.
\end{equation} 
Then we have the following fact
\begin{prop}
For any $\epsilon >0$ and any $\eta>0$, there is  $\kappa(\epsilon, \eta)>0$ such that
if \begin{equation}\label{defkappa}
\|y\|_{T^d} <\kappa(\epsilon, \eta)
\end{equation}
then we have $|F(y, \epsilon)| < \eta$.
\end{prop}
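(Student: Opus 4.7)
The proposition is really just a restatement of the fact that translation is continuous in measure, which the paper has already asserted immediately above the statement. Unpacking the definition: $f(\cdot + y) \to f(\cdot)$ in measure as $y \to 0$ means precisely that for every $\epsilon > 0$, $|F(y,\epsilon)| \to 0$ as $\|y\|_{\T^d} \to 0$, which in turn is exactly the $(\epsilon,\eta)$ formulation in the statement. So one option is simply to record this equivalence and cite the preceding sentence.

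For completeness I would also give a short direct proof via Lusin's theorem. Fix $\epsilon, \eta > 0$. By Lusin applied on the compact group $\T^d$, I can find a closed set $K \subset \T^d$ with $|\T^d \setminus K| < \eta/2$ and a continuous function $g : \T^d \to \R$ with $g = f$ on $K$ (and, by the standard Tietze-type extension on the torus, $g$ can be taken continuous on all of $\T^d$, hence uniformly continuous).

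By uniform continuity of $g$, there is $\kappa = \kappa(\epsilon,\eta) > 0$ such that $\|y\|_{\T^d} < \kappa$ implies $|g(x+y) - g(x)| < \epsilon$ for every $x \in \T^d$. For such $y$, if $x \in F(y,\epsilon)$, then $|f(x+y) - f(x)| \geq \epsilon$, so at least one of $x$ or $x+y$ must lie outside $K$ (otherwise we could replace $f$ by $g$ at both points and contradict the uniform bound). This gives the inclusion
\[
F(y,\epsilon) \;\subset\; (\T^d \setminus K) \;\cup\; \bigl((\T^d \setminus K) - y\bigr),
\]
whose Lebesgue measure, by translation invariance, is bounded by $2|\T^d \setminus K| < \eta$, as required.

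There is no real obstacle here; the only thing that needs a word is that $g$ can be taken continuous on the whole torus (not merely on $K$), but this is immediate from Lusin as usually stated on compact metric spaces. The argument uses nothing about the arithmetic of $\alpha$ or the operator $H_x$ and is genuinely just a soft measure-theoretic lemma that will later be combined, in the proofs of Theorems \ref{genericae} and \ref{measuregeneric}, with Gordon-type criteria.
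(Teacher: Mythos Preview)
Your proposal is correct. The paper does not actually supply a proof of this proposition: it simply records it as a ``fact'' immediately after asserting that $f(\cdot+y)\to f(\cdot)$ in measure as $y\to 0$, and your first paragraph identifies exactly this---the proposition is nothing more than the $\epsilon$--$\eta$ unpacking of that convergence. Your additional Lusin-based argument is a clean, self-contained justification that the paper omits; it goes beyond what the paper does but is entirely in the same spirit and is a standard way to establish the underlying continuity-in-measure statement.
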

We will also set 
\begin{equation}\label{defEM}
E(M)=\{x\in T^d:|f(x)|>M\}
\end{equation}
Clearly, $|E(M)| \rightarrow 0$ as $M \rightarrow \infty$.

\subsection{Continued fraction approximants}
Let $\{\frac{p_n}{q_n}\}$ be continued fraction approximants of $\alpha$. The following properties of continued fractions will be used later. First,
\begin{align}\label{qnan}
\left\lbrace
\begin{matrix}
p_{n+1}=a_np_n+p_{n-1}\\
q_{n+1}=a_nq_n+q_{n-1}
\end{matrix}
\right.
\end{align}
Secondly, for any $1\leq k\leq q_n-1$, we have
\begin{align}\label{qnbest}
\|q_n\alpha\|_{\T}\leq \|k\alpha\|_{\T}.
\end{align}
Thirdly, we have
\begin{align}\label{qnqn+1}
\frac{1}{q_{n+1}}\leq \|q_n\alpha\|_{\T}\leq \frac{2}{q_{n+1}}.
\end{align}
\begin{definition}\label{defbdd}
$\alpha$ is said to be of {\it bounded type} if there exists $C>0$ such that $a_n\leq C$ for any $n\in \N$.
\end{definition}
\begin{rem}
It is well known that bounded type $\alpha$ form a Lebesgue measure zero set.
\end{rem}
If $\alpha$ is of bounded type, by (\ref{qnan}), (\ref{qnbest}) and (\ref{qnqn+1}), we have for some $C>0$, 
\begin{align}\label{boundedtype}
\|k\alpha\|_{\T}\geq \frac{1}{Ck}\ \ \mathrm{for}\ \ \mathrm{any}\ \ k\geq 1.
\end{align}

\subsection{A key ingredient from \cite{G2}}
We have combined Theorems 3.1 and 5.1 from \cite{G2} into the following form, which is more convenient for us to use in this note.
\begin{thm}\label{gordon}
Let $V$  be a complex-valued function on $\Z^d$. 
Suppose there exists $\gamma>\delta>0$ and an infinite set $\mathcal{P}\subset \N^d$ satisfying 
$$
lim_{\mathcal{P} \ni \tau \rightarrow \infty} \tau_i =\infty,\   i=1, \cdots, d,
$$
such that there is a ($\tau_1, \cdots, \tau_d$)-periodic function $V_{\tau}(\cdot)$ satisfying the property that for some $\lambda_0>0$,
$$
\rho_{\tau}<(2d-1+M_{\tau}+\lambda_0)^{-(2d+\gamma)\tau_1\cdots \tau_d},
$$
where 
$$
\rho_{\tau}=\max_{\|n\|_{\infty}\leq (2d+\delta)\tau_1\cdots \tau_d}|V_{\tau}(n)-V(n)|;\ \ M_{\tau}= \max_{\|n\|_{\infty}\leq (2d+\delta)\tau_1\cdots \tau_d}|V(n)|.
$$
Then the equation $Hu=\lambda u$ with any $|\lambda|\leq \lambda_0$ does not have non-trivial $l^2(\Z^d)$ solutions.
\end{thm}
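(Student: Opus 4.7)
The plan is to derive a contradiction from the supposed existence of a non-trivial $\ell^2$ eigenfunction by combining a crude propagation estimate for the eigenvalue equation with a multi-dimensional analogue of the classical Gordon replication lemma applied to the periodic approximant $H_\tau$. Assume for contradiction that $Hu=\lambda u$ with $0\neq u\in\ell^2(\Z^d)$ and $|\lambda|\leq\lambda_0$; after a translation we may assume $c_0:=\max_{\|m\|_\infty\leq 1}|u(m)|>0$. For each $\tau\in\mathcal{P}$, set $R=(2d+\delta)\tau_1\cdots\tau_d$ and $B_R=\{n\in\Z^d:\|n\|_\infty\leq R\}$.

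Let $u_\tau$ denote the solution of $H_\tau u_\tau=\lambda u_\tau$ sharing initial data with $u$ on the slab $\{n\in B_R:n_1\in\{0,1\}\}$, obtained by propagating the second-order recursion outward in the $n_1$-direction. A termwise comparison of this recursion for $V$ and $V_\tau$, both bounded by $M_\tau+\lambda_0$ on $B_R$, yields the deterministic bound
\[
\max_{n\in B_R}\bigl|u_\tau(n)-u(n)\bigr|\ \leq\ \rho_\tau\cdot R\cdot (2d-1+M_\tau+\lambda_0)^{R},
\]
which by the hypothesis $\rho_\tau<(2d-1+M_\tau+\lambda_0)^{-(2d+\gamma)\tau_1\cdots\tau_d}$ with $\gamma>\delta$ is doubly super-exponentially small; in particular it is at most $c_0/4$.

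The core step is the multi-dimensional Gordon replication: because $V_\tau$ is $(\tau_1,\dots,\tau_d)$-periodic, every translate $u_\tau(\cdot+k\cdot\tau)$ with $k\in\Z^d$ also solves $H_\tau v=\lambda v$. Applying a Cayley--Hamilton relation in each coordinate direction to the finite-rank slab-to-slab propagators over one $\tau_i$-period (which pairwise commute), one should produce a dimensional constant $c_d>0$ and an index $k^*\in\{-1,0,1,2\}^d$ with $n_\tau:=k^*\cdot\tau\in B_R$ and $|u_\tau(n_\tau)|\geq c_d\,c_0$. This is the $d$-dimensional analogue of the one-dimensional trichotomy $\max(\|A_q^{-1}v\|,\|v\|,\|A_q v\|,\|A_q^2 v\|)\geq \tfrac12\|v\|$ coming from $A_q^2-(\tr A_q)A_q+I=0$. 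Combined with the preceding approximation, this gives $|u(n_\tau)|\geq(c_d-\tfrac14)c_0$ for every $\tau\in\mathcal{P}$; since the $\tau_i$ escape to infinity along $\mathcal{P}$, the $n_\tau$ furnish infinitely many lattice sites where $|u|$ is bounded below by a fixed positive constant, contradicting $u\in\ell^2(\Z^d)$.

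The main obstacle is the Gordon replication itself. In dimension one, a single $\mathrm{SL}(2,\R)$ transfer matrix and its $2\times 2$ minimal polynomial carry the entire argument for free. In higher dimensions there is no such matrix; one must control the joint algebra of $d$ commuting slab propagators of rank proportional to $\tau_1\cdots\tau_d$, extract a useful polynomial identity from it, and obtain a dimensional lower constant $c_d>0$ uniform in $\tau\in\mathcal{P}$. The sharp numerology $(2d+\gamma)$ with $\gamma>\delta$ in the hypothesis is forced precisely by this multi-dimensional bookkeeping and is, from the proof-plan standpoint, the real content of the Gordon--Nemirovski criterion quoted from \cite{G2}.
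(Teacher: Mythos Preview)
The paper does not prove this statement at all: it is quoted as a black box from \cite{G2} (the authors say they have ``combined Theorems~3.1 and~5.1 from \cite{G2}''), so there is no proof in the paper to compare your attempt against.

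Your sketch correctly isolates the two ingredients---a perturbation estimate on the box $B_R$ and a ``replication'' lower bound for solutions of the periodic problem---and you honestly flag the second as the real obstacle. But as written it is not a proof: the replication step is only asserted (``one should produce a dimensional constant $c_d>0$\ldots''), and the mechanism you propose for it does not work. In one dimension the one-period transfer matrix lies in $\SL(2,\R)$, so Cayley--Hamilton gives a monic degree-$2$ relation with constant term $1$, and the familiar four-point trichotomy follows. In $\Z^d$ with $d\geq 2$ the slab-to-slab propagator in direction $i$ acts on data supported on an entire $(d-1)$-dimensional cross-section; restricted to $B_R$ its rank is of order $R^{d-1}\sim(\tau_1\cdots\tau_d)^{d-1}$, which diverges along $\mathcal{P}$. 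Cayley--Hamilton then only yields a polynomial identity of that enormous degree, with no control on the coefficients, so there is no reason to obtain a lower bound at a shift in the fixed window $\{-1,0,1,2\}^d$, nor any $\tau$-uniform constant $c_d$. Your claim that the propagators in different directions ``pairwise commute'' is also unjustified once one restricts to $B_R$ and has to deal with boundary effects. The actual argument in \cite{G2} does not proceed via a transfer-matrix Cayley--Hamilton identity, and your final paragraph essentially concedes that the hard content is exactly what you have left unproved; reducing to it is not the same as proving it.
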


\section{Proof of Theorem \ref{genericae}}
For any $1>\delta>0$, let $\Gamma_n=(2d+\delta)\prod_{j=1}^{d} \tau_j^{(n)}$. Take $\gamma>1$. For $x, y\in \R^d$, let $x\star y=(x_1y_1,...,x_dy_d)$.
\begin{proof}
For any $k\in \N$, take $n_k$ such that 
\begin{align}\label{choosenk}
\frac{\Gamma_{n_k}}{\tau_i^{(n_k)}}\|\tau_i^{(n_k)}\alpha_i\|_{\T}<\frac{1}{k^2}\ \  \mathrm{for}\ \ i=1,...,d.
\end{align}
By Rokhlin-Halmos Lemma (see Theorem 1 on p.242 in \cite{CFS}), there exists a set $O_{n_k}$ such that $\{O_{n_k}+j\star \alpha\}_{\|j\|_{\infty}\leq k\Gamma_{n_k}}$ are disjoint and 
\begin{align*}
|O_{n_k}|>\frac{1-2^{-k-1}}{(2k\Gamma_{n_k}+1)^d}.
\end{align*}
We further partition $O_{n_k}$ into sets $S_{n_k, l}$, $1\leq l\leq s_{n_k}$ such that
\begin{align}\label{diamS}
\diam{(S_{n_k, l})}<\frac{1}{k}.
\end{align}
Choose compact set $K_{n_k, l}\subset S_{n_k, l}$ such that
\begin{align}\label{measureK}
\sum_{l=1}^{s_{n_k}}|K_{n_k, l}|>\frac{1-2^{-k}}{(2k\Gamma_{n_k}+1)^d}.
\end{align}
For $0\leq m_i\leq \tau_{i}^{(n_k)}-1$, $i=1,...,d$, we define 
\begin{align*}
U_{n_k, l, m}=\bigcup_{|j_i|\leq {k\Gamma_{n_k}}/{\tau^{(n_k)}_i}}K_{n_k, l}+m\star \alpha+ j\star\tau^{(n_k)}\star\alpha.
\end{align*}
Then by (\ref{choosenk}) and (\ref{diamS}), we have
\begin{align}\label{diamU}
\diam{(U_{n_k, l, m})}<\frac{2\sqrt{d}+1}{k}\ \ \mathrm{for}\ \mathrm{any}\ l,m\ \mathrm{above}.
\end{align}
Set
\begin{align*}
F_{n_k}=\{f\in C(\T^d): \ f\ \mathrm{is}\ \mathrm{constant}\ \mathrm{on}\ \mathrm{each}\ U_{n_k, l, m}\},
\end{align*}
and let $\mathcal{F}_{n_k}$ be the $k^{-\frac{2d+\gamma}{2d+\delta}\Gamma_{n_k}}$ neighborhood of $F_{n_k}$ in $C(\T^d)$. 
Note that by (\ref{diamU}) and the fact that continuous function on $\T^d$ is uniformly continuous, we have for each $t\in \N$,
\begin{align*}
\bigcup_{k\geq t}\mathcal{F}_{n_k}
\end{align*}
is open and dense subset of $C(\T^d)$. Thus
\begin{align*}
\mathcal{F}=\bigcap_{t\geq 1}\bigcup_{k\geq t}\mathcal{F}_{n_k}
\end{align*}
is a dense $G_{\delta}$ subset of $C(\T^d)$.

If $f\in \mathcal{F}$, then $f\in \mathcal{F}_{\tilde{n}_{k}}$ for a subsequence $\{\tilde{n}_k\}$ of $\{n_k\}$. 
For $k>4d-1+2\|f\|_{0}$, let 
\begin{align*}
T_{\tilde{n}_k}=\bigcup_{1\leq l\leq s_{\tilde{n}_k}, \|j\|_{\infty}\leq (k-1)\Gamma_{\tilde{n}_k}}(K_{\tilde{n}_k, l}+j\star\alpha).
\end{align*}
Then by (\ref{measureK}),
\begin{align}\label{measureT}
|T_{\tilde{n}_k}|\geq (\frac{(2k-2)\Gamma_{\tilde{n}_k}+1}{2k\Gamma_{\tilde{n}_k}+1})^d (1-2^{-k})\gtrsim 1-2^{-k}\ \ \mathrm{as}\ \ k\rightarrow\infty,
\end{align}
and for any $x\in T_{\tilde{n}_k}$, we have
\begin{align}
\max_{j\in \Z^d, |j_i|\leq \Gamma_{\tilde{n}_k}/{\tau_i^{(\tilde{n}_k)}}}|f(x+j\star\tau^{(\tilde{n}_k)}\star\alpha)-f(x)|<k^{-\frac{2d+\gamma}{2d+\delta}\Gamma_{\tilde{n}_k}}<(4d-1+2\|f\|_0)^{-\frac{2d+\gamma}{2d+\delta}\Gamma_{\tilde{n}_k}}.
\end{align}
By Borel-Cantelli lemma, a.e. $x\in \T^d$ belongs to infinitely many $T_{\tilde{n}_k}$, thus Theorem \ref{gordon} implies that for any $|\lambda|\leq 2d+\|f\|_{0}$, the equation $H_xu=\lambda u$ has no $l^2(\Z^d)$ non-trivial solution. Absence of point spectrum then follows from the fact the norm of $H_x$ is $\leq 2d+\|f\|_{0}$.     $\hfill{} \Box$
\end{proof}

\section{Proof of Theorem \ref{2d}}\label{2dproof}
Theorem \ref{2d} follows from a quick combination of Theorem \ref{genericae} and the following Lemma.
\begin{lemma}\label{iff}
When $d=2$, (\ref{GenericA}) holds if and only if $(\alpha_1, \alpha_2)$ are not both of bounded type.
\end{lemma}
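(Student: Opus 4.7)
The plan is to prove the equivalence as two implications, with the ``both bounded type $\Rightarrow$ failure of (\ref{GenericA})'' direction being a short product estimate and the converse being an explicit construction from continued fraction convergents. For the first direction, I would argue by contrapositive using (\ref{boundedtype}): if both $\alpha_1,\alpha_2$ are of bounded type, then $\|\tau_i\alpha_i\|_{\T}\geq 1/(C\tau_i)$ for all $\tau_i\geq 1$ and $i=1,2$, whence multiplying the two quantities appearing in (\ref{GenericA}) for $d=2$ gives
\[
\bigl(\tau_2\|\tau_1\alpha_1\|_{\T}\bigr)\bigl(\tau_1\|\tau_2\alpha_2\|_{\T}\bigr)=\tau_1\tau_2\|\tau_1\alpha_1\|_{\T}\|\tau_2\alpha_2\|_{\T}\geq\frac{1}{C^2},
\]
so the two factors cannot both tend to zero. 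Thus (\ref{GenericA}) fails along every sequence, proving the $(\Rightarrow)$ contrapositive.

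For the converse, assume without loss of generality that $\alpha_1$ is not of bounded type, and denote the convergents of $\alpha_i$ by $p_n^{(i)}/q_n^{(i)}$ with partial quotients $a_n^{(i)}$. Since $a_n^{(1)}$ is unbounded, I pick a subsequence $\{n_k\}$ with $a_{n_k+1}^{(1)}\geq k^2$ and set $\tau_1^{(k)}=q_{n_k}^{(1)}$, which by (\ref{qnan}) and (\ref{qnqn+1}) yields
\[
\|\tau_1^{(k)}\alpha_1\|_{\T}\leq\frac{2}{q_{n_k+1}^{(1)}}\leq\frac{2}{k^2\,q_{n_k}^{(1)}}.
\]
I then choose $q_{m_k}^{(2)}$ to be the \emph{largest} convergent denominator of $\alpha_2$ with $q_{m_k}^{(2)}\leq k\,q_{n_k}^{(1)}$ and set $\tau_2^{(k)}=q_{m_k}^{(2)}$. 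Maximality forces $q_{m_k+1}^{(2)}>k\,q_{n_k}^{(1)}$, whence $\|\tau_2^{(k)}\alpha_2\|_{\T}\leq 2/(k\,q_{n_k}^{(1)})$, and $\tau_2^{(k)}\to\infty$ because $k\,q_{n_k}^{(1)}\to\infty$ while the denominators of convergents of $\alpha_2$ are unbounded. Direct substitution then produces
\[
\tau_2^{(k)}\|\tau_1^{(k)}\alpha_1\|_{\T}\leq k\,q_{n_k}^{(1)}\cdot\frac{2}{k^2\,q_{n_k}^{(1)}}=\frac{2}{k},\qquad \tau_1^{(k)}\|\tau_2^{(k)}\alpha_2\|_{\T}\leq q_{n_k}^{(1)}\cdot\frac{2}{k\,q_{n_k}^{(1)}}=\frac{2}{k},
\]
both tending to zero, so (\ref{GenericA}) holds along $\{(\tau_1^{(k)},\tau_2^{(k)})\}$.

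The main subtlety is in the converse: $\tau_2^{(k)}$ must simultaneously satisfy three constraints --- namely $\tau_2^{(k)}\to\infty$, a good Diophantine approximation $\|\tau_2^{(k)}\alpha_2\|_{\T}=o(1/q_{n_k}^{(1)})$, and the upper bound $\tau_2^{(k)}=o(q_{n_k+1}^{(1)})$ that makes the first factor of (\ref{GenericA}) decay. The existence of convergent denominators of $\alpha_2$ slotted into the ever-widening gap between $q_{n_k}^{(1)}$ and $q_{n_k+1}^{(1)}$ (forced by $a_{n_k+1}^{(1)}\to\infty$) is exactly the mechanism that resolves this three-way tension; the forward direction is, by contrast, a one-line observation once the symmetric product structure in (\ref{GenericA}) is noted.
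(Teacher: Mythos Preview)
Your proof is correct. The ``only if'' direction (both bounded type $\Rightarrow$ failure of (\ref{GenericA})) is essentially identical to the paper's: you phrase it as a lower bound on the product of the two quantities, the paper phrases it as lower bounds $C\tau_2/\tau_1$ and $C\tau_1/\tau_2$ on the individual terms and observes these cannot both be small --- the same content either way.

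For the ``if'' direction the two arguments genuinely differ. The paper proceeds by contrapositive: it shows that if for some $\epsilon_0>0$ one always has $\max(q_m^{(2)}/q_{n+1}^{(1)},\,q_n^{(1)}/q_{m+1}^{(2)})\geq\epsilon_0$, then by choosing for each $n$ the $m$ at which the second ratio first drops below $\epsilon_0$ one squeezes $\epsilon_0 q_{n+1}^{(1)}\leq q_m^{(2)}\leq q_n^{(1)}/\epsilon_0$, forcing $q_{n+1}^{(1)}\leq q_n^{(1)}/\epsilon_0^2$ for all $n$ and hence bounded type for $\alpha_1$ (and symmetrically for $\alpha_2$). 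Your argument is instead a direct construction: starting from the unbounded partial quotients of one frequency, you manufacture the large gap $q_{n_k+1}^{(1)}/q_{n_k}^{(1)}\geq k^2$ and slot a convergent denominator of $\alpha_2$ into the intermediate scale $kq_{n_k}^{(1)}$. Your approach gives an explicit sequence with a quantitative rate $2/k$ and makes transparent exactly which arithmetic feature (a single unbounded partial-quotient sequence) drives the construction; the paper's contrapositive is shorter and treats the two frequencies symmetrically. One small remark: with the paper's indexing convention $q_{n+1}=a_nq_n+q_{n-1}$ in (\ref{qnan}), the large partial quotient governing $q_{n_k+1}^{(1)}/q_{n_k}^{(1)}$ is $a_{n_k}^{(1)}$ rather than $a_{n_k+1}^{(1)}$, but this is only a shift in labeling and does not affect the argument.
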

\begin{proof}
\subsection*{The ``if" direction}
Let $\{\frac{p_n^{(i)}}{q_n^{(i)}}\}$ be continued fraction approximants of $\alpha_i$, $i=1,2$. By (\ref{qnqn+1}), it suffices to prove the following lemma.
\begin{lemma}
For any $\epsilon>0$, there exists $m,n\in \N$ such that 
\begin{equation}
\max{(\frac{q^{(2)}_{m}}{q^{(1)}_{n+1}}, \frac{q^{(1)}_{n}}{q^{(2)}_{m+1}})}<\epsilon.
\end{equation}
\end{lemma}
We will argue by contradiction. Assume that for some $\epsilon_0>0$, we have $\max{(\frac{q^{(2)}_{m}}{q^{(1)}_{n+1}}, \frac{q^{(1)}_{n}}{q^{(2)}_{m+1}})}\geq \epsilon_0$ for any $(m, n)\in \N^2$. Fix any $n\in \N$, choose $m$ such that
\begin{equation*}
\frac{q_n^{(1)}}{q_{m+1}^{(2)}}<\epsilon_0\leq \frac{q_n^{(1)}}{q_m^{(2)}}.
\end{equation*}
Then by our assumption, we have $\frac{q_m^{(2)}}{q_{n+1}^{(1)}}\geq \epsilon_0$, thus
\begin{equation*}
\epsilon_0 q_{n+1}^{(1)}\leq q_m^{(2)}\leq \frac{q_n^{(1)}}{\epsilon_0}.
\end{equation*}
This immediately implies $q_{n+1}^{(1)}\leq \frac{1}{\epsilon_0^2}q_{n}^{(1)}$ for any $n\in \N$, which means $\alpha_1$ is of bounded type.
Similarly, we could show $\alpha_2$ is also of bounded type, which is a contradiction.$\hfill{} \Box$

\subsection*{The ``only if" direction}
We will again argue by contradiction. Assume $(\alpha_1, \alpha_2)$ are both of bounded type and that there exists $\mathcal{Q}=\{\tau^{(n)}=(\tau_1^{(n)}, \tau_2^{(n)})\}$ such that
\begin{align}\label{C1}
\left\lbrace
\begin{matrix}
\tau_2^{(n)}\|\tau_1^{(n)}\alpha_1\|_{\T}\rightarrow 0,\\
\tau_1^{(n)}\|\tau_2^{(n)}\alpha_2\|_{\T}\rightarrow 0.
\end{matrix}
\right.
\end{align}
However, by (\ref{boundedtype}), we have for some $C>0$,
\begin{align*}
\left\lbrace
\begin{matrix}
\tau_2^{(n)}\|\tau_1^{(n)}\alpha_1\|_{\T}\geq \frac{C\tau_2^{(n)}}{\tau_1^{(n)}},\\
\tau_1^{(n)}\|\tau_2^{(n)}\alpha_2\|_{\T}\geq \frac{C\tau_1^{(n)}}{\tau_2^{(n)}},
\end{matrix}
\right.
\end{align*}
which obviously can not converge to $0$ at the same time, contradicting (\ref{C1}). $\hfill{} \Box$
\end{proof}

\section{Proof of Theorem \ref{measuregeneric}}
We need to divide into two different cases: {\it Case 1}. $\|f\|_{\infty}=\infty$ or {\it Case 2}. $\|f\|_{\infty}<\infty$. Here we provide a detailed proof of {\it Case 1} and discuss briefly about {\it Case 2} at the end of this section.

{\it Case 1}.
Let $\{e_i\}_{i=1}^d$ be the standard basis for $\R^d$. For a number $M>0$, let $E(M)$ be defined as in (\ref{defEM}). For $\tau=(\tau_1,...\tau_d)\in \N^d$, let $M_{\tau}=\inf\{M: E(M)\leq(\tau_1\cdots \tau_d)^{-d}\}$. Let $\kappa$ be defined as in (\ref{defkappa}).
The following lemma yields Theorem \ref{measuregeneric} directly.
\begin{lemma}\label{aephase}
Suppose there exists an infinite sequence $\mathcal{Q}$ with $\lim_{\mathcal{Q}\ni \tau\rightarrow\infty}\tau_i\rightarrow\infty$ for $i=1,..,d$.
Then if the frequencies satisfy 
\begin{align}\label{measureassume}
\|(\tau_1\alpha_1,..., \tau_d\alpha_d)\|_{\T^d} <\kappa(M_{\tau}^{-(2d+\gamma)\tau_1\cdots \tau_d}, {(\tau_1\cdots \tau_d)}^{-2})
\end{align}
for some $\gamma>0$ and any $\tau\in \mathcal{Q}$,
the operator $H_{x}$ has no point spectrum for a.e. $x\in \T^d$.
\end{lemma}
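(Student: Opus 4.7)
The plan is to verify the hypothesis of the Gordon--Nemirovski criterion (Theorem~\ref{gordon}) for the random potential $V(n) = f(T^n x)$ at almost every $x$, by constructing, for each $\tau\in\mathcal{Q}$, an explicit $(\tau_1,\ldots,\tau_d)$-periodic approximant $V_\tau$. A natural choice is $V_\tau(n) := f(x + \tilde n\star\alpha)$, where $\tilde n$ is the componentwise reduction of $n$ modulo $\tau$ into $\prod_i[0,\tau_i)$; this is manifestly $(\tau_1,\ldots,\tau_d)$-periodic in $n$. For a lattice point $n$ in the box $\|n\|_\infty\leq L := (2d+\delta)\tau_1\cdots\tau_d$, writing $n=\tilde n + k\star\tau$ with $|k_i|\leq L/\tau_i$, one has
\[
V(n) - V_\tau(n) = f\bigl(y + {\textstyle\sum_i} k_i\tau_i\alpha_i e_i\bigr) - f(y), \qquad y := x+\tilde n\star\alpha,
\]
which I would bound telescopically through $N_n := \sum_i|k_i|$ elementary increments, each comparing $f$-values at points differing by $\pm\tau_i\alpha_i e_i$.

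For the elementary shift $\tau_i\alpha_i e_i$ we have $\|\tau_i\alpha_i e_i\|_{\T^d} = \|\tau_i\alpha_i\|_\T \leq \|(\tau_1\alpha_1,\ldots,\tau_d\alpha_d)\|_{\T^d}$, so assumption (\ref{measureassume}) and the definition of $\kappa$ give $|F(\tau_i\alpha_i e_i, M_\tau^{-(2d+\gamma)\tau_1\cdots\tau_d})| < (\tau_1\cdots\tau_d)^{-2}$ for each $i$. Consequently, for any fixed $n$ in the box, the exceptional set of $x$'s at which some elementary increment exceeds $M_\tau^{-(2d+\gamma)\tau_1\cdots\tau_d}$ is a union of $N_n$ translates of sets of measure $<(\tau_1\cdots\tau_d)^{-2}$, hence has measure at most $N_n(\tau_1\cdots\tau_d)^{-2}$. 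In parallel, the set of $x$'s violating $\max_{\|n\|_\infty\leq L}|V(n)| \leq M_\tau$ is contained in $\bigcup_{\|n\|_\infty\leq L}(E(M_\tau) - n\star\alpha)$, whose measure is bounded by $(2L+1)^d(\tau_1\cdots\tau_d)^{-d}$. Passing to a sufficiently sparse subsequence of $\mathcal{Q}$ so that the accumulated exceptional measures become summable, and then invoking Borel--Cantelli, one obtains a full-measure set of phases $x$ for which both estimates hold simultaneously for infinitely many $\tau$.

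For such a $\tau$ and such an $x$, I would apply Theorem~\ref{gordon} with $\lambda_0 := M_\tau$. The polynomial factor $N_n \leq dL/\min_i\tau_i$ from the telescoping is absorbed by lowering the exponent from $2d+\gamma$ to any $2d+\gamma'$ with $\delta<\gamma'<\gamma$, since $M_\tau^{(\gamma-\gamma')\tau_1\cdots\tau_d}$ either grows super-exponentially (when $M_\tau\to\infty$ along $\mathcal{Q}$, i.e., in the unbounded case $\|f\|_\infty=\infty$) or the spectrum is already contained in the interval $|\lambda|\leq M_\tau + 2d$ (when $\|f\|_\infty<\infty$). Theorem~\ref{gordon} then rules out $l^2$ eigenvectors at every $|\lambda|\leq\lambda_0$, and letting $\tau$ range over the subsequence drives $\lambda_0\to\infty$ in the unbounded case, exhausting all of $\R$. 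The main obstacle is the combinatorial bookkeeping in the preceding paragraph: making the cumulative exceptional measure from the union over $n$ in the box summable along a subsequence requires exploiting overlaps among the translates, together with the $T$-invariance of the event $\{x:\lambda\notin\sigma_{\mathrm{pp}}(H_x)\}$, which by ergodicity upgrades positive measure to full measure for each fixed $\lambda$ and lets one conclude for almost every $x$ simultaneously via Fubini over $\lambda$.
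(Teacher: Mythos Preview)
Your overall strategy coincides with the paper's: the same periodic approximant $V_\tau(n)=f(x+\tilde n\star\alpha)$, the same telescopic reduction to elementary shifts $\pm\tau_i\alpha_i e_i$, the same use of the sets $F(\cdot,\cdot)$ and $E(M_\tau)$, Borel--Cantelli, and then Theorem~\ref{gordon} with $M_\tau\to\infty$ in the unbounded case.

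Where you and the paper diverge is exactly at the point you flag as the ``main obstacle.'' You index the exceptional set by $n$ ranging over the whole box $\|n\|_\infty\le L$, which has $\asymp(\tau_1\cdots\tau_d)^d$ points; this is why your $Y$-estimate $(2L+1)^d(\tau_1\cdots\tau_d)^{-d}$ is only $O(1)$ and why you are forced to reach for overlaps and an ergodicity/Fubini rescue. The paper instead indexes by the quotient $j$ (your $k$) alone, over $I_\tau=\{j:|j_i|\le(2d+\delta)\tau_1\cdots\tau_d/\tau_i\}$, setting
\[
X_j^\tau=F\bigl(j\star\tau\star\alpha,\ (\textstyle\sum_i|j_i|)\,M_\tau^{-(2d+\gamma)\tau_1\cdots\tau_d}\bigr),\qquad Y_j^\tau=E(M_\tau)-j\star\tau\star\alpha,
\]
and $Z^\tau=\bigcup_{j\in I_\tau}(X_j^\tau\cup Y_j^\tau)$. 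Since $|I_\tau|\asymp(\tau_1\cdots\tau_d)^{d-1}$, the $Y$-part now contributes $|I_\tau|\,|E(M_\tau)|\lesssim(\tau_1\cdots\tau_d)^{-1}\to0$, and the $X$-part is handled by the bound $|X_j^\tau|\le(\sum_i|j_i|)(\tau_1\cdots\tau_d)^{-2}$ coming from the same telescoping you use. The paper then asserts $|Z^\tau|\to0$ directly, passes to a summable subsequence, and applies Borel--Cantelli; no ergodicity or Fubini argument appears. In short, the ``overlap'' you were hoping to exploit is precisely that all $n$ with the same quotient $k$ produce translates of the \emph{same} sets $X_k^\tau$, $Y_k^\tau$, so one should sum over $I_\tau$ rather than over the full box.
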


\begin{proof}
For any $\tau\in \mathcal{Q}$. Let $X_j^\tau=F(j\star \tau \star \alpha,\ (|j_1|+\cdots |j_d|)M_{\tau}^{-(2d+\gamma)\tau_1\cdots \tau_d})$ and $j^{(i)}=e_i\star j$. 
One could check directly by trigonometric inequality that
\begin{align*}
|f(x+j\star\tau\star\alpha)-f(x)|\leq \sum_{i=1}^d |f(x+(j^{(1)}+\cdots +j^{(i)})\star\tau\star\alpha)-f(x+(j^{(1)}+\cdots +j^{(i-1)})\star\tau\star\alpha)|,
\end{align*}
where we set $j^{(1)}+\cdots +j^{(i-1)}=0$ when $i=1$. Thus
\begin{align}\label{X1}
X_j^\tau\subset F(j^{(i)}\star\tau\star\alpha, |j_i| M_{\tau}^{-(2d+\gamma)\tau_1\cdots \tau_d})+(j^{(1)}+\cdots+j^{(i-1)})\star \tau\star\alpha.
\end{align}
Since $\|e_i\star\tau\star\alpha\|_{\T^d}\leq \|\tau\star\alpha\|_{\T^d}\leq \kappa(M_{\tau}^{-(2d+\gamma)\tau_1\cdots \tau_d}, {(\tau_1\cdots \tau_d)}^{-2})$, again by trigonometric inequality, we have
\begin{align}\label{X2}
|F(j^{(i)}\star\tau\star\alpha, |j_i|M_{\tau}^{-(2d+\gamma)\tau_1\cdots \tau_d})|\leq |j_i|{(\tau_1\cdots \tau_d)}^{-2}.
\end{align}
Hence, putting (\ref{X1}) and (\ref{X2}) together, we have
\begin{equation}\label{Xmeasure}
|X_j^{\tau}|\leq \frac{|j_1|+\cdots +|j_d|}{\tau_1^3\cdots \tau_d^3}.
\end{equation}
Let $Y_j^{\tau}=\{ x\in \T^d:|f(x+j\star \tau \star \alpha)|>M_{\tau} \} =E(M_{\tau})-j\star \tau \star \alpha$ , obviously,
\begin{align}\label{Ymeasure}
|Y_j^{\tau}|=|E(M_{\tau})|.
\end{align} 
For any $\frac{\gamma}{2}>\delta>0$, let $I_{\tau}=\{j\in \Z^d:|j_i| \leq (2d+\delta) \tau_1 \cdots \tau_d / \tau_i\ \mathrm{for}\ i=1,2, \cdots, d \}$. We denote $Z^{\tau}=(\cup_{I_{\tau}} X_j^{\tau})\cup(\cup_{I_{\tau}} Y_j^{\tau})$.
Combining (\ref{Xmeasure}), (\ref{Ymeasure}), we get
\begin{equation}\label{Zmeasure}
|Z^{\tau}| \leq \sum _{j\in I_{\tau}} \frac{|j_1|+\cdots+|j_d|}{(\tau_1\cdots \tau_d)^2}+\frac{|I_{\tau}|}{(\tau_1\cdots \tau_d)^d} \rightarrow 0\ \mathrm{as}\ \mathcal{Q}\ni\tau\rightarrow\infty.
\end{equation}
By Borel-Cantelli lemma, for a.e. $x \in \T^d$, there is an infinite sequence $\{\tau\}_{\tau\in \mathcal{P}_x}$ such that $x\notin Z^{\tau}$ for any $\tau\in \mathcal{P}_x$. Define a $\tau$-periodic potential by setting $f_{\tau}(x+n\star\alpha)=f(x+m\star\alpha)$, where $n_j\equiv m_j$ ($\mod \tau_j$) with $0\leq m_j\leq \tau_j-1$. Then since $x\notin Z^{\tau}$, we have
\begin{equation}\label{X3}
\max_{\|n\|_{\infty}\leq (2d+\delta)\tau_1\cdots \tau_d} |f_{\tau}(x+n\star \alpha)-f(x+n\star\alpha)|< M_{\tau}^{-(2d+\gamma)\tau_1\cdots \tau_d},
\end{equation} 
where $M_{\tau}\geq \max_{\|n\|_{\infty}\leq (2d+\delta)\tau_1\cdots \tau_d}|f(x+n\star\alpha)|$. 
Note that since $\|f\|_{\infty}=\infty$, we have $\lim_{\mathcal{P}_x\ni \tau\rightarrow \infty}M_{\tau}=\infty$. Together with (\ref{X3}) this implies that for any $\lambda_0>0$, for $\tau\in \mathcal{P}_x$ large, we have
\begin{equation}\label{X4}
\max_{\|n\|_{\infty}\leq (2d+\delta)\tau_1\cdots \tau_d} |f_{\tau}(x+n\star \alpha)-f(x+n\star\alpha)|< (2d-1+M_{\tau}+\lambda_0)^{-(2d+\frac{\gamma}{2})\tau_1\cdots \tau_d}.
\end{equation}
By Theorem \ref{gordon}, $H_x$ has no point spectrum. $\hfill{} \Box$
\end{proof}

{\it Case 2}. Note that when $\|f\|_{\infty}<\infty$, one could choose $M=\|f\|_{\infty}$ so that $E(M)=0$. Then one can prove Lemma \ref{aephase} with (\ref{measureassume}) replaced by
\begin{align}\label{measureassume2}
\|(\tau_1\alpha_1,..., \tau_d\alpha_d)\|_{\T^d} <\kappa((4d-1+2M)^{-(2d+\gamma)\tau_1\cdots \tau_d}, {(\tau_1\cdots \tau_d)}^{-2}).
\end{align}
$\hfill{} \Box$

\section{Arithmetic condition for $d>2$}\label{rem}
\begin{lemma}
For any $\epsilon>0$, let 
\begin{align}
A_{m_1,...,m_d}^{\epsilon}=\{\alpha\in \T^d: \max_{i=1,...,d}(\frac{m_1\cdots m_d}{m_i}\|m_i\alpha\|_{\T})<\epsilon\}.
\end{align}
Then $A^{\epsilon}=\limsup A_{m_1,...,m_d}^{\epsilon}$ has Lebesgue measure zero.
\end{lemma}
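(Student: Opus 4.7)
The plan is a direct first Borel--Cantelli argument. Fix $\epsilon>0$. The defining inequality of $A^\epsilon_{m_1,\ldots,m_d}$ decouples across coordinates: interpreting $\|m_i\alpha\|_\T$ as $\|m_i\alpha_i\|_\T$ (consistent with (\ref{GenericA})), one has $\alpha\in A^\epsilon_{m_1,\ldots,m_d}$ iff
\begin{equation*}
\|m_i\alpha_i\|_\T<\frac{\epsilon}{\prod_{j\ne i}m_j}\quad\text{for every } i=1,\ldots,d.
\end{equation*}
For any $m\in\N$ and $r>0$, the slice $\{\beta\in\T:\|m\beta\|_\T<r\}$ is a union of $m$ arcs and has Lebesgue measure $\min(2r,1)\le 2r$. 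Hence, by Fubini,
\begin{equation*}
\bigl|A^\epsilon_{m_1,\ldots,m_d}\bigr|\;\le\;\prod_{i=1}^d\frac{2\epsilon}{\prod_{j\ne i}m_j}\;=\;\frac{(2\epsilon)^d}{(m_1\cdots m_d)^{d-1}},
\end{equation*}
since each $m_k$ appears in the product $\prod_{j\ne i}m_j$ for exactly $d-1$ values of the index $i$.

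Next I would sum this bound over $(m_1,\ldots,m_d)\in\N^d$. For $d\ge 3$ one has
\begin{equation*}
\sum_{(m_1,\ldots,m_d)\in\N^d}\bigl|A^\epsilon_{m_1,\ldots,m_d}\bigr|\;\le\;(2\epsilon)^d\Bigl(\sum_{m=1}^\infty m^{-(d-1)}\Bigr)^d\;<\;\infty,
\end{equation*}
which is the exact step where the standing hypothesis $d>2$ of this section enters. Fixing any enumeration of $\N^d$ as a sequence and applying the first Borel--Cantelli lemma then yields $|A^\epsilon|=\bigl|\limsup A^\epsilon_{m_1,\ldots,m_d}\bigr|=0$, as claimed.

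The argument is entirely routine and I do not foresee a real obstacle. The only mildly delicate point is that I am using the loose upper bound $2r$ rather than $\min(2r,1)$, which overestimates $|A^\epsilon_{m_1,\ldots,m_d}|$ when the $m_i$ are small; but this is harmless because the resulting series is still summable on $\N^d$ for $d\ge 3$, precisely because $\zeta(d-1)<\infty$. The same bound diverges at $d=2$ (where $\zeta(1)=\infty$), which is consistent with Lemma \ref{iff} and explains why the arithmetic criterion (\ref{GenericA}) holds on a full-measure set in dimension two but becomes exceptional for $d\ge 3$.
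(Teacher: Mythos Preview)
Your proof is correct and follows essentially the same route as the paper's: compute (or bound) $|A^\epsilon_{m_1,\ldots,m_d}|$ by decoupling coordinates, sum over $\N^d$ to get $(2\epsilon)^d\,\zeta(d-1)^d<\infty$ for $d\ge 3$, and apply Borel--Cantelli. If anything, your version is slightly more careful, since the paper writes the measure as an equality $\frac{(2\epsilon)^d}{(m_1\cdots m_d)^{d-1}}$ where your bound $\min(2r,1)\le 2r$ shows only the inequality is guaranteed for small $m_i$; this has no effect on the argument.
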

This Lemma clearly implies that when $d\geq 3$, $\alpha$'s such that (\ref{GenericA}) holds form a Lebesgue measure zero set.
\begin{proof}
Clearly, $|A_{m_1,...,m_d}^{\epsilon}|=\frac{(2\epsilon)^d}{(m_1\cdots m_d)^{d-1}}$. Thus
\begin{align}
\sum_{m_i\in \N}|A_{m_1,...,m_d}^{\epsilon}|=(2\epsilon)^d (\sum_{m\in \N}\frac{1}{m^{d-1}})^d<\infty\ \ \mathrm{for}\ \ d\geq 3.
\end{align}
By Borel-Cantelli lemma, $|A^{\epsilon}|=0$. $\hfill{} \Box$
\end{proof}

\section*{acknowledgement}
This research was partially supported by the NSF DMS-1401204. We are grateful to Svetlana Jitomirskaya for her guidance and useful discussions. 
F.Y was supported by CSC of China (No.201406330007) and the NSFC (grant no. 11571327).


\begin{thebibliography}{10}

\bibitem{BD}Boshernitzan, M. and Damanik, D., 2008. Generic continuous spectrum for ergodic Schr\"odinger operators. Communications in Mathematical Physics, 283(3), pp.647-662.

\bibitem{CFS}Cornfeld, I. P., Fomin, S. V. and Sinai, Y. G., 1982. Ergodic Theory. Grundlehren der mathematischen Wissenschaften, 245.


\bibitem{G0}Gordon, A.Y., 1976. The point spectrum of the one-dimensional Schrodinger operator. Uspekhi Matematicheskikh Nauk, 31(4), pp.257-258.

\bibitem{G1}Gordon, A.Y., 2016. Discrete Schr\"odinger operators with potentials defined by measurable sampling functions over Liouville torus rotations. Journal of fractal geometry, to appear.

\bibitem{G2}Gordon, A.Y. and Nemirovski, A., 2017. Absence of Eigenvalues for Quasi-Periodic Lattice Operators with Liouville Frequencies. International Mathematics Research Notices, 2017(10), pp.2948-2963.

\bibitem{Wonder}Simon, B., 1995. Operators with singular continuous spectrum: I. General operators. Annals of Mathematics, 141(1), pp.131-145.
\end{thebibliography}
\end{document}